\documentclass{article}
\usepackage{graphicx} 
\usepackage{bbm}
\usepackage{amssymb}
\usepackage{float}
\usepackage{amsmath}
\usepackage{hyperref}
\usepackage{amsthm}
\newtheorem{lemma}{Lemma}
\usepackage{dsfont}
\usepackage{authblk}
\usepackage{marvosym}
\newtheorem{theorem}{Theorem}
\newtheorem{corollary}{Corollary}
\usepackage[sorting=none,style=nature]{biblatex} 
\date{}

\title{Sample complexity bounds for the Jensen-Shannon divergence}
\author[1]{\mbox{Oren Richter}}
\author[1]{\mbox{Adi Ben-Ari}}
\author[1]{\mbox{Tom Talpir}}
\author[1,\Letter]{\mbox{Elad Schneidman}}

\affil[1]{Department of Brain Sciences, Weizmann Institute of Science, Rehovot 76100, Israel}
\affil[\Letter]{elad.schneidman@weizmann.ac.il}
\begin{document}
\maketitle

\begin{abstract}
The Jensen-Shannon divergence (JSD) is a symmetric and bounded measure of the dissimilarity of two probability distributions, which has become a standard tool in statistics, information theory, and machine learning. We complement the understanding of its mathematical properties by presenting an analysis of the amount of data that is needed to distinguish between two distributions, given the value of JSD between them. We find the number of independent and identically distributed samples that suffice for a classifier to determine which of two distributions generated observed data at a desired error rate, for two complementary classifiers: we show that for the log-likelihood-ratio classifier, a sample size that grows as the inverse JSD is sufficient, whereas for a majority-vote classifier assembled from independent single-sample decisions, the sufficient size grows as the squared inverse JSD. These distinct scalings offer operational readings of JSD values and their translation into distinguishability in different contexts. 
\end{abstract}
 
\section*{Introduction}
The Kullback-Leibler (KL) divergence between two probability distributions $P(x)$ and $Q(x)$ , 
\begin{equation} 
D_{KL}[P(x)||Q(x)] = \sum_x P(x) \log_2 \frac{P(x)}{Q(x)} 
\end{equation} 
has been a prominent measure of the dissimilarity of probability distributions \cite{kl_divergence_1951} due to its foundational role in information theory, and its coding-related interpretations as a measure of coding inefficiency or distinguishability of sources \cite{cover_elements_2005}. Notably, the utility of KL and its interpretation are limited by its asymmetric nature, and because it diverges if there is an $x$ for which $P(x)>0$ whereas $Q(x)=0$. 

The Jensen-Shannon divergence (JSD) is a symmetric and finite measure of the dissimilarity of probability distributions \cite{lin_divergence_1991}, which has become a popular tool in many data-oriented applications \cite{grosse_analysis,sims_alignment-free,goodfellow_generative_2014}, as well as an interesting measure from a theoretical perspective. JSD measures the dissimilarity of $P(x)$ and $Q(x)$, by weighting their respective KL dissimilarity to an intermediate distribution $M(x)$, namely, 

\begin{equation} D_{JS}[P(x)||Q(x)] = \lambda D_{KL}[P(x)||M(x)] +(1-\lambda) D_{KL}[Q(x)||M(x)] 
\end{equation} 
where $M(x)=\lambda P(x)+ (1-\lambda) Q(x)$, and  $\lambda$ is a fraction between 0 and 1 (commonly taken to be equal to $\frac{1}{2}$, a convention we also use here). Importantly, the value of $D_{JS}$ is bounded, ranging from 0 for identical distributions, to 1 bit for non-overlapping distributions with disjoint supports. Moreover, it belongs to the family of $f$-divergences, inheriting their information-monotonicity under coarse graining \cite{csiszar_information_67, csiszar_information_2004,amari_alpha}; and
$\sqrt{D_{JS}[P(x)||Q(x)]}$ is a proper metric, satisfying the triangle inequality \cite{endres_new_metric}. These properties have made the JSD popular across different fields, from statistics and information theory to data science and machine learning, where it appears in two-sample testing, generative modeling, and representation learning, among many other settings.
 
Despite its widespread use, the ``operational'' meaning of a given JSD value is often left implicit. While there is a known bound on the Bayes classification error in distinguishing between two probability distributions from a single observation based on knowing that $D_{JS}[P(x)||Q(x)]=d$ \cite{lin_divergence_1991}, it is not immediately clear how JSD governs distinguishability from many samples.

We therefore ask here how many independent and identically distributed (i.i.d.) samples are sufficient for a classifier to identify the source distribution at a desired classification error rate. The sample complexity of distinguishing two distributions from i.i.d. samples is a classical quantity in statistical decision theory, and under uniform prior over the two distributions, as we consider here, it has been shown to be given by the Hellinger distance between the two distributions \cite{bar_yossef_thesis, pensia_sample}, up to multiplicative constants. We present here a complementary, direct, and self-contained derivation, tracking explicit constants in the relation between JSD and sample size for the log-likelihood-ratio classifier. This result gives us a baseline for the second bound we derive, characterizing the sample size for the majority-vote classifier built from single-sample decisions. The two analyses yield qualitatively different scalings ($1/d$ versus $1/d^{2}$), reflecting the interplay between JSD values and classification power, and expanding our understanding and intuition of JSD.

\section*{Results }

We consider the problem of binary hypothesis testing (also known as binary detection), given two probability distributions with support $\mathcal{X}$, which we denote $P$ and $Q$: observing an i.i.d.\ sample $X_{1:N}=\{x_1,\dots,x_N\} \in \mathcal{X}^N$ generated under one of two competing hypotheses,
\begin{equation}
H_0: X_{1:N} \overset{\text{iid}}{\sim} P, \qquad H_1: X_{1:N} \overset{\text{iid}}{\sim} Q
\end{equation}
our goal is to decide, from the sample $X_{1:N}$, whether $P$ or $Q$ is the generating distribution. A decision rule (or a classifier) is a map $\mathcal{C}: \mathcal{X}^N \to \{0,1\}$, whose performance is characterized by the Type-I and Type-II error probabilities \cite{neyman_ix_1933,cover_elements_2005}, explicitly given by:
\begin{equation}
    \alpha(\mathcal{C}) = \sum_{ X_{1:N} \in R_Q} \prod_{i=1}^N P(x_i) \ , \ 
    \beta(\mathcal{C}) = \sum_{ X_{1:N} \in R_P} \prod_{i=1}^N Q(x_i)
\end{equation}

\noindent where 
\begin{equation}
    R_Q = \{ X_{1:N} \in \mathcal{X}^N \ |  \ \mathcal{C}(X_{1:N}) = 1 \} \ , \   R_P = \{ X_{1:N} \in \mathcal{X}^N \ |  \ \mathcal{C}(X_{1:N}) = 0 \}
\end{equation}
\noindent are the rejection regions of $H_0$ and $H_1$, respectively.

\subsection*{Sample complexity of optimal binary detection scales as the inverse JSD}

Under uniform prior over the two hypotheses, the rule that minimizes the Bayes probability of error $p_e^{(N)}=\tfrac{1}{2} \alpha(\mathcal{C}) + \tfrac{1}{2} \beta(\mathcal{C})$ is the likelihood-ratio test
\begin{equation}\label{LLR_classifier}
\mathcal{C}_{\text{LLR}}(X_{1:N}) = \mathbbm{1} \left[ \sum_{i=1}^N \log \frac{Q(x_i)}{P(x_i)} > 0 \right],
\end{equation}

\noindent where $\mathbbm{1}[\cdot]$ is the indicator function.
We recall that the Neyman-Pearson lemma \cite{neyman_ix_1933} states that, among all tests with the same Type-I error rate, this test is the most powerful.

We next quantify the sample complexity of the LLR classifier in terms of the Jensen-Shannon divergence.

\begin{theorem}[Sample complexity of the LLR classifier]
\label{thm:sample_complexity}
Let $P$ and $Q$ be two probability distributions with common support $\mathcal{X}$, and let $d = D_{JS}(P,Q)$ denote their Jensen-Shannon divergence. For a target error rate $\epsilon > 0$, the Bayes probability of error of the LLR classifier $\mathcal{C}_{\text{LLR}}$ on an i.i.d.\ sample of size $N$ satisfies $p_e^{(N)} \leq \epsilon$ when
\begin{equation}
    N \geq \frac{\log(1/\epsilon)}{d \, \log(2)}.
\label{eq:thm_N_bound}
\end{equation}
\end{theorem}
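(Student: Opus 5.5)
The plan is to go through the Bhattacharyya coefficient $BC(P,Q)=\sum_x\sqrt{P(x)Q(x)}$. I will bound the LLR error by $BC^N$, and then show that $d\le 1-BC$, which gives $BC\le 2^{-d}$. Throughout, $\log$ in the statement is read as the natural logarithm, while $d$ is in bits.

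\emph{Step 1 (Chernoff/Bhattacharyya bound).} Since $\mathcal{C}_{\text{LLR}}$ is the Bayes rule under a uniform prior,
\begin{equation*}
p_e^{(N)}=\tfrac12\sum_{X_{1:N}}\min\Bigl(\prod_i P(x_i),\prod_i Q(x_i)\Bigr).
\end{equation*}
Using $\min(a,b)\le\sqrt{ab}$ and factorizing over the i.i.d.\ coordinates gives
\begin{equation*}
p_e^{(N)}\le \tfrac12\, BC(P,Q)^N.
\end{equation*}
This step is routine.

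\emph{Step 2 (JSD versus Bhattacharyya).} I will prove $d\le 1-BC(P,Q)$ pointwise in $x$. Write $p=P(x)$, $q=Q(x)$ and $s=p+q$. The contribution of $x$ to $d$ is
\begin{equation*}
\tfrac12\Bigl[p\log_2\tfrac{2p}{s}+q\log_2\tfrac{2q}{s}\Bigr]=\tfrac{s}{2}\bigl(1-H_2(t)\bigr), \qquad t=p/s,
\end{equation*}
where $H_2$ is the binary entropy in bits. The contribution of $x$ to $1-BC$ is
\begin{equation*}
\tfrac{s}{2}-\sqrt{pq}=\tfrac{s}{2}\bigl(1-2\sqrt{t(1-t)}\bigr),
\end{equation*}
using $\sum_x (p+q)/2=1$. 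So it suffices to prove the scalar inequality
\begin{equation*}
H_2(t)\ge 2\sqrt{t(1-t)} \qquad \text{for } t\in[0,1].
\end{equation*}
This is the main obstacle. I would prove it by symmetry about $t=1/2$, with equality at $t\in\{0,1/2,1\}$. One route is the substitution $t=\sin^2\theta$, followed by a concavity argument on $[0,1/2]$. Another is to compare $H_2(t)\ge 4t(1-t)$-type bounds against the square root near the endpoints, where $H_2(t)\sim t\log_2(1/t)$ dominates $2\sqrt t$ only after checking numerically small $t$. Alternatively, I can show that $g(t)=H_2(t)^2-4t(1-t)$ is nonnegative by examining its derivative. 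If a clean self-contained proof resists, I will cite the known inequality $H_2(t)\ge 2\sqrt{t(1-t)}$ (Lin 1991 / Topsøe-type bounds).

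\emph{Step 3 (conclusion).} From Step 2, $BC\le 1-d$. Since $2^{-d}\ge 1-d\ln 2\ge 1-d$, we get $BC\le 2^{-d}=e^{-d\ln 2}$. Combining with Step 1,
\begin{equation*}
p_e^{(N)}\le \tfrac12 e^{-Nd\ln 2}.
\end{equation*}
This is $\le\epsilon$ whenever $N\ge \ln(1/\epsilon)/(d\ln 2)$. The factor $\tfrac12$ gives spare slack, which is not needed.
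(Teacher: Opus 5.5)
Your Step 2 is false as stated. The scalar inequality $H_2(t)\ge 2\sqrt{t(1-t)}$ fails for every $t\notin\{0,\tfrac12,1\}$. For example, $H_2(\tfrac14)\approx 0.811$ while $2\sqrt{3/16}=\sqrt3/2\approx 0.866$. Near $t=0$ it is $2\sqrt t$ that dominates $t\log_2(1/t)$, not the other way around. The inequality that actually holds is the reverse, $H_2(t)\le 2\sqrt{t(1-t)}$, which is the direction the Tops\o e-type entropy bounds you mention give, so no citation rescues the step.

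Consequently your claim $d\le 1-\mathrm{BC}(P,Q)=H^2(P,Q)$ is false in general. For $P=(1,0)$, $Q=(\tfrac12,\tfrac12)$ one gets $d\approx 0.311$ bits but $1-\mathrm{BC}=1-1/\sqrt2\approx 0.293$. More importantly, for nearby distributions $d\log 2$ and $H^2(P,Q)$ agree to leading order: both equal $\tfrac18\sum_x (P(x)-Q(x))^2/P(x)$ to second order. So $d\approx H^2/\log 2>H^2$, and the $\log 2$ factor is sharp and cannot be dropped, precisely in the small-$d$ regime the theorem is about.

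What does hold is the paper's Lemma~\ref{lem:hellinger_djs}, $H^2(P,Q)\ge \log(2)\, d$. In your notation its per-term form is $\log(2)\,\bigl(1-H_2(t)\bigr)\le 1-2\sqrt{t(1-t)}$. The paper proves it as $F\le G$ using symmetry about $\tfrac12$, $A(\tfrac12)=A'(\tfrac12)=0$, and $A''\ge 0$ for $A=G-F$.

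Your chain aimed at the stronger intermediate $\mathrm{BC}\le 1-d$, but you only need $\mathrm{BC}\le 1-d\log 2\le e^{-d\log 2}=2^{-d}$. This follows from the corrected per-term inequality together with $1-x\le e^{-x}$.

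Your Step 1, the Bhattacharyya bound $p_e^{(N)}\le\tfrac12\,\mathrm{BC}(P,Q)^N$ via $\min(a,b)\le\sqrt{ab}$, is correct. It is effectively the paper's Chernoff bound evaluated at $\gamma=\tfrac12$. So once Step 2 is replaced by Lemma~\ref{lem:hellinger_djs}, your argument becomes the paper's, with a spare factor $\tfrac12$.
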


The proof relies on bounding the Bayes error of $\mathcal{C}_{\text{LLR}}$ via the Chernoff information, then lower bounding the Chernoff information by the Jensen-Shannon divergence through a chain of three lemmas, and combining these in a corollary that completes the proof.

The Bayes probability of error $p_e^{(N)}$ of $\mathcal{C}_{\text{LLR}}$ is upper bounded by 
\begin{equation}
    p_e^{(N)} \leq e^{-N\,C(P,Q)},
\label{eq:chernoff_bound}
\end{equation}
where $C(P,Q)$ is the Chernoff information \cite{chernoff_measure_1952}, defined as
\begin{equation}
    C(P,Q) = \max_{\gamma\in[0,1]} -\log \sum_{x\in\mathcal{X}} P(x)^{\gamma}\,Q(x)^{1-\gamma},
\label{eq:chernoff_def}
\end{equation}
with $\mathcal{X}$ the common support of $P$ and $Q$ and $\log$ the natural logarithm (chapter 11.9 in \cite{cover_elements_2005}).

To relate \eqref{eq:chernoff_bound} to the Jensen-Shannon divergence, we establish the lower bound
\begin{equation}
    C(P,Q) \geq \log(2)\cdot D_{JS}(P,Q),
\label{eq:chernoff_js_bound}
\end{equation}
through three lemmas, which we then chain together. The argument relies on the following quantities. The Bhattacharyya coefficient is
\begin{equation}
    \mathrm{BC}(P,Q) = \sum_{x\in\mathcal{X}} \sqrt{P(x)\,Q(x)},
\label{eq:bhattacharyya_coeff}
\end{equation}
the Bhattacharyya distance is
\begin{equation}
    B(P,Q) = -\log \mathrm{BC}(P,Q),
\label{eq:bhattacharyya_def}
\end{equation}
and the squared Hellinger distance is
\begin{equation}
    H^{2}(P,Q) = \tfrac{1}{2}\sum_{x\in\mathcal{X}} \big(\sqrt{P(x)} - \sqrt{Q(x)}\big)^{2}
    = 1 - \mathrm{BC}(P,Q).
\label{eq:hellinger_def}
\end{equation}
 
\begin{lemma}[The Chernoff information dominates the Bhattacharyya distance]
\label{lem:chernoff_bhatt}
$C(P,Q) \geq B(P,Q)$.
\end{lemma}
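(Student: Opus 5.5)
The plan is to compare the definition of $C(P,Q)$ in \eqref{eq:chernoff_def} directly with that of $B(P,Q)$ in \eqref{eq:bhattacharyya_def}. The Chernoff information is a maximum over $\gamma\in[0,1]$ of the function
\begin{equation*}
f(\gamma) = -\log \sum_{x\in\mathcal{X}} P(x)^{\gamma}\,Q(x)^{1-\gamma}.
\end{equation*}
A maximum over a set is bounded below by the value of the function at any particular point of that set. So it suffices to evaluate $f$ at one admissible choice of $\gamma$ whose value equals $B(P,Q)$.

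The natural choice is $\gamma = \tfrac12$, which lies in $[0,1]$. There $P(x)^{1/2}Q(x)^{1/2} = \sqrt{P(x)\,Q(x)}$, so the sum inside the logarithm is exactly $\mathrm{BC}(P,Q)$ from \eqref{eq:bhattacharyya_coeff}. Hence $f(\tfrac12) = -\log \mathrm{BC}(P,Q) = B(P,Q)$, and therefore
\begin{equation*}
C(P,Q) = \max_{\gamma\in[0,1]} f(\gamma) \;\geq\; f(\tfrac12) = B(P,Q).
\end{equation*}

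There is essentially no obstacle. The only point to check is that all quantities are well defined. Since $P$ and $Q$ share the common support $\mathcal{X}$, every term $P(x)^{\gamma}Q(x)^{1-\gamma}$ with $x\in\mathcal{X}$ is strictly positive, so the sum is strictly positive. By H\"older's inequality the sum is also at most $1$, so $f(\gamma)$ is finite and nonnegative for all $\gamma$. Moreover, $f$ is continuous on the compact interval $[0,1]$, so the maximum in \eqref{eq:chernoff_def} is attained. Even without attainment, the same argument works with a supremum in place of the maximum.
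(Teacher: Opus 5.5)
Your proposal is correct and follows the paper's proof exactly: both evaluate $f(\gamma)=-\log\sum_x P(x)^{\gamma}Q(x)^{1-\gamma}$ at the admissible point $\gamma=\tfrac12$, identify $f(\tfrac12)=B(P,Q)$, and use that a maximum over $[0,1]$ is at least the value at any point of $[0,1]$. Your extra well-definedness checks are correct but not needed by the paper's argument: the strictly positive sum, the H\"older bound giving $f\geq 0$, and attainment of the maximum (with the supremum as a fallback).
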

\begin{proof}
Denoting \[f(\gamma):=-\log\sum_x P(x)^{\gamma}Q(x)^{1-\gamma},\] it immediately follows from definitions \eqref{eq:chernoff_def},\eqref{eq:bhattacharyya_def} that
\begin{equation}
    C(P,Q) = \max_{\gamma\in[0,1]}f(\gamma) \geq f(\tfrac{1}{2}) = B(P,Q).
    \label{eq:step1}
\end{equation}
\end{proof}
 
\begin{lemma}[The Bhattacharyya distance dominates the squared Hellinger distance]
\label{lem:bhatt_hellinger}
$B(P,Q) \geq H^{2}(P,Q)$.
\end{lemma}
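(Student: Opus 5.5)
The plan is to reduce the statement to the elementary scalar inequality $-\log t \geq 1 - t$, valid for all $t > 0$. We apply it at $t = \mathrm{BC}(P,Q)$. By definitions \eqref{eq:bhattacharyya_def} and \eqref{eq:hellinger_def} we have $B(P,Q) = -\log \mathrm{BC}(P,Q)$ and $H^2(P,Q) = 1 - \mathrm{BC}(P,Q)$. The lemma is therefore exactly this inequality evaluated at $t = \mathrm{BC}(P,Q)$.

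The steps are as follows. First, we check that $\mathrm{BC}(P,Q)$ lies in $(0,1]$. The upper bound follows from Cauchy--Schwarz:
\[
\sum_x \sqrt{P(x)Q(x)} \leq \Big(\sum_x P(x)\Big)^{1/2}\Big(\sum_x Q(x)\Big)^{1/2} = 1.
\]
Equivalently, it follows from $H^2(P,Q) \geq 0$ in \eqref{eq:hellinger_def}. Positivity holds because $P$ and $Q$ share the common support $\mathcal{X}$, so every summand is strictly positive. This also guarantees that $B(P,Q)$ is finite.

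Second, we prove $-\log t \geq 1 - t$ for $t > 0$. This is the standard bound $\log t \leq t - 1$, which holds because $\log$ is concave and the line $t - 1$ is its tangent at $t = 1$. Alternatively, set $g(t) = t - 1 - \log t$; then $g(1) = 0$ and $g'(t) = 1 - 1/t$, so $g$ attains its minimum at $t = 1$.

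Third, substituting $t = \mathrm{BC}(P,Q)$ gives $B(P,Q) \geq 1 - \mathrm{BC}(P,Q) = H^2(P,Q)$.

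There is no real obstacle here. The only point needing care is the domain issue: ensuring $\mathrm{BC}(P,Q) > 0$ so that the logarithm is defined. This is handled by the common-support assumption. Without that assumption, $\mathrm{BC}(P,Q) = 0$ would give $B(P,Q) = +\infty$, and the inequality would still hold trivially.
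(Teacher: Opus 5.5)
Your proof is correct and matches the paper's argument: both apply the tangent-line bound $-\log y \geq 1-y$ (from the concavity of $\log$) at $y=\mathrm{BC}(P,Q)$, and both use $H^2(P,Q)=1-\mathrm{BC}(P,Q)$. Your justification that $\mathrm{BC}(P,Q)>0$ via the common-support assumption, with the $B=+\infty$ remark otherwise, is more careful than the paper's appeal to nonnegativity and normalization alone. Your check that $\mathrm{BC}\le 1$ is harmless but unnecessary, since the scalar inequality holds for all $t>0$.
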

\begin{proof}
Since the function $\log(y)$ is concave, it lies below its tangent line at $y=1$, which is $y-1$, namely $\log y \leq y -1 \quad \forall y>0$. Thus $-\log(y) \geq 1 - y$. Substituting $y = \mathrm{BC}(P,Q)$ (which is positive since $P,Q$ are probability distributions, namely non-negative and with sum 1),
\begin{equation}
    B(P,Q) = -\log \mathrm{BC}(P,Q) \geq 1 - \mathrm{BC}(P,Q) = H^{2}(P,Q),
\label{eq:step2}
\end{equation}
where the last equality is \eqref{eq:hellinger_def}.
\end{proof}
 
\begin{lemma}[Squared Hellinger distance dominates the Jensen-Shannon divergence]
\label{lem:hellinger_djs}
$H^{2}(P,Q) \geq \log(2)\cdot D_{JS}(P,Q)$.
\end{lemma}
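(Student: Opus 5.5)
The plan is to prove the inequality \emph{termwise}, reducing it to a one-variable inequality that can be settled by comparing power series. First I convert units. $D_{JS}$ is defined with $\log_2$, so $\log(2)\cdot D_{JS}(P,Q)$ is the same divergence written with natural logarithms:
\[
\log(2)\,D_{JS}(P,Q)=\sum_{x}\tfrac12\Big[P(x)\log\tfrac{P(x)}{M(x)}+Q(x)\log\tfrac{Q(x)}{M(x)}\Big],\qquad M=\tfrac12(P+Q).
\]
By \eqref{eq:hellinger_def}, $H^2(P,Q)=\sum_x \big(M(x)-\sqrt{P(x)Q(x)}\big)$. It therefore suffices to show, for each $x$, that the summand on the left is at most $M(x)-\sqrt{P(x)Q(x)}$.

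Next I parametrize each summand. Fix $x$ with $m:=M(x)>0$; points with $m=0$ contribute nothing to either side. Write $P(x)=m(1+u)$ and $Q(x)=m(1-u)$ with $u\in[-1,1]$. Both summands are homogeneous of degree one in $(P(x),Q(x))$ and symmetric under $u\mapsto -u$. The JSD summand becomes $m\,\varphi(u)$, where
\[
\varphi(u)=\tfrac12\big[(1+u)\log(1+u)+(1-u)\log(1-u)\big],
\]
and the Hellinger summand becomes $m\,\psi(u)$ with $\psi(u)=1-\sqrt{1-u^2}$. So the lemma reduces to $\varphi(u)\le\psi(u)$ for $u\in[0,1]$.

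A crude bound such as $\log y\le 2(\sqrt y-1)$ cannot work here. Both sides agree to second order ($\varphi\approx\psi\approx u^2/2$ near $u=0$), so any bound that loses at order $u^2$ fails. I would therefore compare Taylor coefficients exactly. For $|u|<1$,
\[
\varphi(u)=\sum_{k\ge1}\frac{u^{2k}}{2k(2k-1)},\qquad \psi(u)=\sum_{k\ge1}\binom{2k}{k}\frac{u^{2k}}{(2k-1)4^k},
\]
where the first series follows from expanding $\log(1\pm u)$ and the second is the binomial series of $\sqrt{1-u^2}$. Both series have only nonnegative terms, so termwise domination suffices. This reduces to
\[
\binom{2k}{k}4^{-k}\ge\frac{1}{2k}.
\]
That follows from the standard bound $\binom{2k}{k}4^{-k}\ge \frac{1}{2\sqrt k}$, which is easily proved by induction using the ratio $\frac{2k+1}{2k+2}$ between consecutive terms. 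Equality holds at $k=1$, which is consistent with the second-order tightness.

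The case $u=1$ (disjoint points) follows by continuity, or directly: $\varphi(1)=\log 2<1=\psi(1)$. Summing the termwise inequality over $x$ gives $\log(2)\,D_{JS}(P,Q)\le H^2(P,Q)$.

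The main obstacle is precisely this second-order tightness. It rules out simple concavity tricks and forces the exact series comparison, and the only genuinely nontrivial ingredient there is the central binomial coefficient estimate.
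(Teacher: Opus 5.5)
Your proof is correct, and its first half matches the paper: both arguments work termwise and use degree-one homogeneity to normalize. Your normalization $M(x)=1$ with $u$ the half-difference is the paper's $a+b=1$ after the substitution $v=(1+u)/2$, which gives $\varphi=2F$ and $\psi=2G$.

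The routes diverge in how the one-variable inequality is proved.

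\textbf{The paper's route.} It argues by calculus. $A=G-F$ is symmetric about $\tfrac12$, with $A(\tfrac12)=A'(\tfrac12)=0$. Writing $A'(u)=g(t)$ with $t=\sqrt{u/(1-u)}$, one gets $g'(t)=\tfrac12(1-1/t)^2\ge 0$, so $A''\ge 0$. Convexity then forces $A\ge 0$.

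\textbf{Your route.} You expand both sides in even powers of $u$ and compare coefficients. This reduces everything to $\binom{2k}{k}4^{-k}\ge \frac{1}{2k}$. That estimate follows by a one-line induction directly, without the $1/(2\sqrt{k})$ detour: $c_{k+1}=c_k\cdot\frac{2k+1}{2k+2}$ and $\frac{2k+1}{2k+2}\ge\frac{2k}{2k+2}$.

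\textbf{What each buys.} Your approach gives strictly more information. It shows $\psi-\varphi$ has nonnegative Taylor coefficients, with leading term $u^4/24$, which makes explicit both the sharpness of the constant $\log 2$ and the size of the slack. The paper's approach avoids power series, their convergence, and the binomial series, and needs only two derivatives.

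\textbf{A correction to your closing remark.} You claim that second-order tightness ``rules out simple concavity tricks and forces the exact series comparison.'' That is not right: the paper's convexity argument on $A$ is exactly such a simple trick, and it works. What second-order tightness rules out is a crude pointwise bound on the logarithm, not a second-derivative argument.
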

\begin{proof}
We claim the inequality holds term-wise. Writing $0 \leq a = P(x)\leq 1$ and $0 \leq b = Q(x)\leq 1$, define
\begin{equation}
    \varphi(a,b) = \tfrac{1}{2}a\log\frac{2a}{a+b} + \tfrac{1}{2}b\log\frac{2b}{a+b}, \qquad
    \psi(a,b) = \tfrac{1}{2}(a+b) - \sqrt{ab},
\label{eq:phi_psi_def}
\end{equation}
with the convention $0\log 0 := 0$, so that $D_{JS}(P,Q) = \frac{1}{\log(2)}\sum_x \varphi(P(x),Q(x))$ and $H^{2}(P,Q) = \sum_x \psi(P(x),Q(x))$. It therefore suffices to prove $\varphi(a,b)\leq\psi(a,b)$ for all $0 \leq a,b\leq 1$.
 
The case where $a=b=0$ is trivial, since $\varphi(a,b) = \psi(a,b) = 0$. Otherwise, $a+b > 0$. We observe that both $\varphi$ and $\psi$ are positively homogeneous of degree one, namely $\varphi(\lambda a,\lambda b)=\lambda\varphi(a,b)$ and $\psi(\lambda a,\lambda b)=\lambda\psi(a,b)$ for every $\lambda>0$. Thus, we can assume with no loss of generality that $a+b=1$ (because homogeneity implies that $\varphi(\frac{a}{a+b},\frac{b}{a+b})\leq\psi(\frac{a}{a+b},\frac{b}{a+b}) \iff \varphi(a,b)\leq\psi(a,b)$). This observation reduces the claim to the one-variable inequality
\begin{equation}
    F(u) \leq G(u), \qquad u\in[0,1],
\label{eq:one_var_ineq}
\end{equation}
where
\begin{equation}
    F(u) := \varphi(u,1-u)  = \tfrac{1}{2}\big[u\log(2u) + (1-u)\log(2(1-u))\big]
\label{eq:F_def}
\end{equation}
and
\begin{equation}
    G(u) := \psi(u, 1-u) = \tfrac{1}{2} - \sqrt{u(1-u)}.
\label{eq:G_def}
\end{equation}
 
To prove \eqref{eq:one_var_ineq}, define $A(u) = G(u) - F(u)$. We show that $A(u)\geq 0$ for all $u\in[0,1]$.
First, we observe that $A(u)=A(1-u)$, namely $A(u)$ is symmetric around $\tfrac{1}{2}$ in $[0,1]$. Additionally, for $u=\tfrac{1}{2}$ we get $F(u)=G(u)=0$ and thus $A(u)=0$. Thus, to show that $A(u)\geq 0$ it is sufficient to show that $A'(u)\geq0 \quad \forall u\in[\tfrac{1}{2},1)$ (because then $A$ is non-decreasing on $[\tfrac{1}{2},1)$, so that $A(u)\geq A(\tfrac{1}{2})=0$). We establish this by reusing the same argument for $A'$: we show that $A'(\tfrac{1}{2})=0$ and that $A''(u) \geq 0 \quad \forall u\in(0,1)$; the latter implies $A'$ is non-decreasing on $[\tfrac{1}{2},1)$, so that $A'(u)\geq A'(\tfrac{1}{2})=0$ throughout that interval.
 
Taking the first derivative we get
\[A'(u)=\frac{2u-1}{2\sqrt{u(1-u)}}-\frac{1}{2}\log \Big(\frac{u}{1-u} \Big).\]
First, we validate that $A'(\tfrac{1}{2})=0$, which is indeed the case. Next, we rewrite the derivative as
\[A'(u)=\frac{1}{2}\Bigg(\sqrt{\frac{u}{1-u}} - \sqrt{\frac{1-u}{u}} \Bigg) - \log \Bigg(\sqrt{\frac{u}{1-u}} \Bigg).\]
Now, defining $t(u) := \sqrt{\frac{u}{1-u}}$ and $g(t):= \frac{1}{2} \Big(t-\frac{1}{t} \Big) - \log(t)$, the derivative takes the form $A'(u)=g(t(u))$.
Differentiating again via the chain rule we get
\[A''(u)=\frac{d}{dt}g(t)\cdot \frac{d}{du}t(u).\]
To prove that $A''(u)\geq0 \quad \forall u \in (0,1)$ we can show that each component is non-negative separately:
\[\frac{d}{dt}g(t)=\frac{1}{2}\Big( 1 + \frac{1}{t^2} \Big) - \frac{1}{t}=\frac{1}{2}\Big( \frac{1}{t} - 1 \Big)^2 \geq 0 \quad \forall t > 0,\]
and
\[\frac{d}{du}t(u) = \frac{1}{2\sqrt{u}(1-u)^{\tfrac{3}{2}}} \geq 0 \quad \forall u\in(0,1).\]
Hence $A''(u)=\frac{d}{dt}g(t)\cdot \frac{d}{du}t(u) \geq0 \quad \forall u\in (0,1)$, which completes the argument that $A(u)\geq0$ on $[0,1]$.
 
This establishes \eqref{eq:one_var_ineq}, hence $\varphi(a,b)\leq\psi(a,b)$ for all $0\leq a,b\leq 1$, and summing over $\mathcal{X}$ yields
\begin{equation}
    \log(2) \cdot D_{JS}(P,Q) = \sum_{x\in\mathcal{X}} \varphi(P(x),Q(x)) \leq \sum_{x\in\mathcal{X}} \psi(P(x),Q(x)) = H^{2}(P,Q).
\label{eq:step3}
\end{equation}
\end{proof}

\begin{corollary}[Combining the lemmas]
\label{cor:combining}
The Chernoff information and the Jensen-Shannon divergence satisfy $C(P,Q) \geq \log(2)\cdot D_{JS}(P,Q)$, and consequently Theorem~\ref{thm:sample_complexity} holds.
\end{corollary}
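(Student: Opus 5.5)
The plan is to chain Lemmas~\ref{lem:chernoff_bhatt}, \ref{lem:bhatt_hellinger} and \ref{lem:hellinger_djs}, and then substitute the result into the Chernoff bound \eqref{eq:chernoff_bound}. Each lemma gives one link of the inequality, and all three links point the same way, so transitivity gives
\begin{equation*}
C(P,Q) \;\geq\; B(P,Q) \;\geq\; H^{2}(P,Q) \;\geq\; \log(2)\cdot D_{JS}(P,Q).
\end{equation*}
This is exactly \eqref{eq:chernoff_js_bound}, which is the first claim of the corollary. All three quantities are computed with the natural logarithm: $C$ and $B$ are defined that way, and Lemma~\ref{lem:hellinger_djs} already carries the factor $\log(2)$ that converts $D_{JS}$ from bits to nats. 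So the units are consistent and nothing else needs adjusting.

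For the second claim, I will insert this into \eqref{eq:chernoff_bound}. Since $t\mapsto e^{-Nt}$ is decreasing for $N\geq 0$, we get
\[
p_e^{(N)} \leq e^{-N C(P,Q)} \leq e^{-N d\log(2)} .
\]
It is therefore enough that $e^{-N d \log 2}\leq \epsilon$, which is equivalent to $N d\log(2) \geq \log(1/\epsilon)$. When $d>0$ this is exactly condition \eqref{eq:thm_N_bound}, and Theorem~\ref{thm:sample_complexity} follows.

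Two edge cases need a remark. If $\epsilon \geq 1$, then $\log(1/\epsilon)\leq 0$, so any $N$ is allowed, and the claim holds trivially because $p_e^{(N)}\leq \tfrac12$ always. If $d=0$, the right-hand side of \eqref{eq:thm_N_bound} is infinite or undefined for $\epsilon<1$, so the hypothesis is vacuous. Neither case is a real obstacle.

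The only step that needs care is the use of \eqref{eq:chernoff_bound} itself. That bound is quoted from the literature and not proved in the paper, so I would say explicitly that we rely on it. For completeness, it can be justified in one line. For any $\gamma\in[0,1]$ we have $\min(a,b)\leq a^{\gamma}b^{1-\gamma}$. The Bayes error of the LLR rule is $\tfrac12\sum_{X_{1:N}}\min\big(\prod_i P(x_i),\prod_i Q(x_i)\big)$, and this is at most $\tfrac12\big(\sum_x P^{\gamma}Q^{1-\gamma}\big)^N$, using that the sum over product distributions factorizes. Optimizing over $\gamma$ gives $p_e^{(N)}\leq e^{-N C(P,Q)}$, with even an extra factor $\tfrac12$ to spare.
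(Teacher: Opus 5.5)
Your proposal is correct and matches the paper's proof: you chain the three lemmas to get $C(P,Q)\geq B(P,Q)\geq H^{2}(P,Q)\geq\log(2)\,D_{JS}(P,Q)$, substitute this into the Chernoff bound, and solve $e^{-Nd\log 2}\leq\epsilon$ for $N$. Your edge-case remarks ($d=0$, $\epsilon\geq 1$) and your one-line derivation of $p_e^{(N)}\leq\tfrac12 e^{-N C(P,Q)}$ from $\min(a,b)\leq a^{\gamma}b^{1-\gamma}$ are valid additions; the paper instead cites the Chernoff bound from the literature and does not treat those cases.
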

\begin{proof}
Chaining Lemmas \ref{lem:chernoff_bhatt}, \ref{lem:bhatt_hellinger}, and \ref{lem:hellinger_djs} gives the desired relation between the Chernoff information and the Jensen-Shannon divergence,
\begin{equation}
    C(P,Q) \;\geq\; B(P,Q) \;\geq\; H^{2}(P,Q) \;\geq\; \log(2) \cdot D_{JS}(P,Q).
\label{eq:chernoff_djs}
\end{equation}
 
Substituting \eqref{eq:chernoff_djs} into the error bound \eqref{eq:chernoff_bound}, and denoting $D_{JS}(P,Q)=d$, the Bayes error rate of $\mathcal{C}_{\text{LLR}}$ satisfies
\begin{equation}
    p_e^{(N)} \leq e^{-N\,C(P,Q)} \leq e^{-N d \log(2)}.
\label{eq:err_djs_bound}
\end{equation}
To guarantee an upper bound of $\epsilon>0$ on the classification error rate of $\mathcal{C}_{\text{LLR}}$, it is then sufficient to require
\begin{equation}
    e^{-N d \log(2)} \leq \epsilon \iff N \geq \frac{\log(1/\epsilon)}{d \log(2)}.
\label{eq:final_N_bound_opt}
\end{equation}
This means that given a desired fixed error rate for the optimal classifier $\mathcal{C}_{\text{LLR}}$ that distinguishes between the distributions $P$ and $Q$ based on an i.i.d. sample, the sample size that is sufficient to meet the error rate is proportional to $\frac{1}{D_{JS}(P,Q)}$.
\end{proof}

\subsection*{Sample complexity of distributed binary detection scales as the squared inverse JSD}

We also consider the scenario of distributed classification, in which the decision is not based directly on the $N$ samples, but is instead aggregated from $N$ independent single-sample decisions. Let $\mathcal{C}_{\text{single}}: \mathcal{X}\to\{0,1\}$ be a fixed single-sample classifier, and let $X_{1:N}=(x_1,\dots,x_N)$ be $N$ i.i.d.\ samples drawn from the true source which is either $P$ or $Q$, with $N$ odd. The majority-vote classifier $\mathcal{C}_{\text{multi}}$ applies $\mathcal{C}_{\text{single}}$ to each sample and outputs the majority label,
\begin{equation}
\mathcal{C}_{\text{multi}}(X_{1:N}) := \mathrm{mode}\big\{\mathcal{C}_{\text{single}}(x_1),\dots,\mathcal{C}_{\text{single}}(x_N)\big\}.
\end{equation}

We note that $\mathcal{C}_{\text{multi}}$ is generally suboptimal relative to the Bayes/Neyman--Pearson-optimal test on the raw samples $X_{1:N}$, since hard-quantizing each sample to a single bit discards the magnitude of its evidence. It is, however, the Bayes-optimal rule for the distributed classification setting, in which each of the $N$ classifiers observes a single sample and must commit to its own decision before the resulting $N$ i.i.d.\ binary decisions are combined. When all $N$ local classifiers are identical, the optimal data-fusion rule of \cite{chair_optimal_1986} reduces exactly to an (unweighted) majority vote.

We now state the main result of this section, which quantifies the sample complexity of the majority-vote classifier in terms of the Jensen-Shannon divergence.

\begin{theorem}[Sample complexity of the majority-vote classifier]
\label{thm:sample_complexity_multi}
Let $P$ and $Q$ be two probability distributions with common support $\mathcal{X}$, and let $d = D_{JS}(P,Q)$ denote their Jensen-Shannon divergence. Let $\mathcal{C}_{\text{multi}}$ be the majority-vote classifier built from $N$ Bayes-optimal single-sample classifiers, with Bayes error $p_e$ each, applied to $N$ i.i.d.\ samples with $N$ odd. For any target error rate $\epsilon > 0$, the error rate of $\mathcal{C}_{\text{multi}}$ is at most $\epsilon$ whenever
\begin{equation}
    N \geq \frac{2\log(1/\epsilon)}{d^2}.
\label{eq:thm_N_bound_multi}
\end{equation}
\end{theorem}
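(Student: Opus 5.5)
Our plan is to reduce the majority vote to a sum of i.i.d. Bernoulli variables, bound its error with Hoeffding's inequality, and then relate the single-sample margin $\tfrac12 - p_e$ to the JSD.

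\textbf{Step 1: Hoeffding bound for the vote.} Fix the true source, say $P$. Let $Z_i = \mathbbm{1}[\mathcal{C}_{\text{single}}(x_i)\neq 0]$. The $Z_i$ are i.i.d. Bernoulli with mean equal to the conditional single-sample error $\alpha_1$ under $P$. Since $N$ is odd, the vote errs exactly when $\sum_i Z_i > N/2$. Hoeffding's inequality then gives
\[
\Pr\Big[\tfrac1N\textstyle\sum_i Z_i - \alpha_1 \geq \tfrac12-\alpha_1\Big] \leq e^{-2N(\frac12-\alpha_1)^2}.
\]
The same argument under $Q$ gives $e^{-2N(\frac12-\beta_1)^2}$, where $\beta_1$ is the single-sample error under $Q$.

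There is a subtlety here: $p_e = \tfrac12(\alpha_1+\beta_1)$, but the individual conditional errors need not both be below $\tfrac12$. I would handle this by working with the averaged quantity, using either of two routes:
\begin{itemize}
\item interpret the stated error rate per hypothesis under the symmetric margin $\tfrac12-p_e$; or
\item argue via convexity of $t\mapsto e^{-2N(1/2-t)^2}$ over the relevant range.
\end{itemize}
If neither works, I would fall back on the paper's evident intended reading: the $N$ votes are i.i.d. Bernoulli$(p_e)$ errors, as for a symmetric classifier. In that case the error of $\mathcal{C}_{\text{multi}}$ is at most $e^{-2N(\frac12-p_e)^2}$.

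\textbf{Step 2: relate $p_e$ to the JSD.} For a single sample the Bayes error is
\[
p_e=\tfrac12\sum_x\min(P,Q)=\tfrac12\big(1-\mathrm{TV}(P,Q)\big),
\]
so $\tfrac12-p_e=\tfrac12\mathrm{TV}$. I then need $\mathrm{TV}\geq d$, up to the right constant. Lemma~\ref{lem:hellinger_djs} gives $H^2\geq \log(2)\,d$, and the standard inequality $\mathrm{TV}\geq H^2$ follows from
\[
\min(a,b)\le\sqrt{ab}, \qquad\text{so}\qquad 1-\mathrm{TV}=\sum\min\le \mathrm{BC}=1-H^2 .
\]
This yields only $\tfrac12-p_e\geq \tfrac{\log 2}{2}d$.

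To match the constant $2/d^2$ in \eqref{eq:thm_N_bound_multi}, I instead need $\tfrac12-p_e\geq d/2$, i.e. $\mathrm{TV}\geq d$. For this I would use Lin's bound cited in the introduction: $p_e\le\tfrac12(1-d)$ with $d$ in bits. This is equivalent to $\mathrm{TV}\ge d$, which also follows from the termwise inequality
\[
\varphi(a,b)\le \log(2)\,\tfrac12|a-b|
\]
under the same normalization $a+b=1$. That inequality is the concavity statement $1-h_2(u)\le|2u-1|$, where $h_2$ is the binary entropy in bits, and it holds because $1-h_2$ is convex with value $0$ at $u=\tfrac12$ and value $1$ at $u\in\{0,1\}$. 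I would prove it in that one-variable style, mirroring the proof of Lemma~\ref{lem:hellinger_djs}.

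\textbf{Step 3: combine.} From Steps 1 and 2, the error of $\mathcal{C}_{\text{multi}}$ is at most
\[
e^{-2N(d/2)^2}=e^{-Nd^2/2}.
\]
Requiring this to be at most $\epsilon$ gives $N\geq 2\log(1/\epsilon)/d^2$.

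\textbf{Main obstacle.} I expect Step 1's reduction from the averaged error $p_e$ to a clean Bernoulli$(p_e)$ model to be the delicate point. The JSD-to-TV inequality is routine by comparison.
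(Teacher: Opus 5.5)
Your Steps 2 and 3 are correct, and your overall structure matches the paper's: Hoeffding on the vote-error indicators, then Lin's bound $p_e\le\frac12(1-d)$, then solve for $N$. There are two small differences. First, you use the additive Hoeffding form $e^{-2N(\frac12-p_e)^2}$. The paper instead uses Hoeffding's relative-entropy form $(2\sqrt{p_e(1-p_e)})^N$, gets $(1-d^2)^{N/2}$, and only then relaxes with $-\log(1-d^2)\ge d^2$. Both land on $e^{-Nd^2/2}$ and the same constant. Second, your termwise proof of $\mathrm{TV}\ge d$ via $1-h_2(u)\le|2u-1|$ is a valid self-contained replacement for citing Lin. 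You are also right that going through $H^2$ would cost a factor $(\log 2)^{-2}$ in $N$.

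The gap is the one you flagged in Step 1, and neither of your routes can close it, because without a symmetry assumption the statement is false. Under each hypothesis separately, the vote's error is governed by $\alpha_1$ or $\beta_1$. Unless they are equal, the larger of the two exceeds $p_e$, and Lin's bound controls only their average.

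Your proposed routes do not get around this. Route (b) points the wrong way. In the regime that matters, $\frac12-t>1/(2\sqrt N)$, the function $f(t)=e^{-2N(1/2-t)^2}$ is convex, so Jensen gives $\frac12[f(\alpha_1)+f(\beta_1)]\ge f(p_e)$, a lower bound rather than an upper bound. And if $\alpha_1\ge\frac12$, Hoeffding says nothing at all. Route (a) simply re-assumes $\alpha_1=\beta_1$.

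Here is a concrete counterexample. Take $P=(0.9,0.1)$ and $Q=(0.95,0.05)$. The Bayes-optimal single-sample rule votes $Q$ at $x=1$, so $\alpha_1=0.9$ and $\beta_1=0.05$. Under $P$ the majority vote is wrong with probability tending to $1$, so the error of $\mathcal{C}_{\text{multi}}$ tends to $\frac12$ while $d>0$ stays fixed. This contradicts the claim for every $\epsilon<\frac12$.

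Your fallback, treating the votes as i.i.d.\ Bernoulli$(p_e)$ errors, is exactly what the paper's proof does implicitly. It writes $\mathbb{E}[E_i]=p_e$ and applies Hoeffding, although the $E_i$ are i.i.d.\ only conditionally on the source, with mean $\alpha_1$ or $\beta_1$. So your argument is as complete as the paper's, but no more. To turn it into a proof, add a hypothesis such as $\alpha_1=\beta_1$ (so both equal $p_e$), or more generally $\max(\alpha_1,\beta_1)\le\frac12(1-d)$. Then run your Step 1 separately under each hypothesis.
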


\begin{proof}
Let $E_i\in\{0,1\}$ be a random variable that takes 0 if the $i$-th sample was correctly classified by $\mathcal{C}_{\text{single}}$ and 1 if there was a classification error. The classifier $\mathcal{C}_{\text{multi}}$ errs if and only if the majority of individual classifications were erroneous, namely
\begin{equation}
    \sum_{i=1}^{N}E_i\geq\frac{N}{2}.
\label{eq:maj_vote_err}
\end{equation}
Denoting the Bayes probability error of $\mathcal{C}_{\text{single}}$ by $p_e$, we get that the expected value of $E_i$ is $\mathbb{E}[E_i]=p_e$, and thus we can rewrite \eqref{eq:maj_vote_err} as
\begin{equation}
    \frac{1}{N}\sum_{i=1}^{N}E_i - \mathbb{E}[E_i]\geq \frac{1}{2}-p_e.
\label{eq:hoeff_maj_vote_error}
\end{equation}
Hoeffding (Theorem 1 in \cite{hoeffding_probability_1963}) then gives an upper bound on the probability that the event described by \eqref{eq:hoeff_maj_vote_error} occurs, which is
\begin{equation}
   \Pr\Big[\frac{1}{N}\sum_{i=1}^{N}E_i - \mathbb{E}[E_i]\geq \frac{1}{2}-p_e \Big] \leq \Big(2\sqrt{p_e(1-p_e)}\Big)^{N}.
\label{eq:hoeff_bound}
\end{equation}

Theorem 4 in \cite{lin_divergence_1991} states an upper bound for $p_e$ in terms of $D_{JS}(P,Q)$. Assuming a uniform prior over the source distribution and denoting $D_{JS}(P,Q)=d$, the bound is given by
\begin{equation}
    p_e\leq\frac{1}{2}(1 - d).
\label{eq:lin_bound}
\end{equation}

We note that $0\leq p_e\leq \frac{1}{2}$ and thus using the fact that the function $f(x)=x(1-x)$ is monotonically increasing in $[0,\frac{1}{2}]$, we obtain
\begin{equation}
    p_e(1-p_e) \leq \frac{1}{2}(1-d) \Big(1 - \frac{1}{2}(1-d) \Big) = \frac{1}{4}(1-d^{2}).
\label{eq:single_err_var_bound}
\end{equation}
Substituting \eqref{eq:single_err_var_bound} into \eqref{eq:hoeff_bound}, we bound the error rate of $\mathcal{C}_{\text{multi}}$ further with
\begin{equation}
    \Bigg(2\sqrt{\frac{1}{4}(1-d^{2})}\Bigg)^{N} = (1-d^2)^\frac{N}{2}.
\label{eq:multi_err_djs_bound}
\end{equation}

To guarantee an upper bound of $\epsilon>0$ on the classification error rate of $\mathcal{C}_{\text{multi}}$, we then require
\begin{equation}
    (1-d^2)^\frac{N}{2}\leq \epsilon \iff N \geq \frac{\log(1/\epsilon)}{-\log \Big(\sqrt{1-d^2}\Big)}.
\label{eq:N_bound}
\end{equation}
Reusing the inequality $-\log(y)\geq 1-y$ from Lemma \ref{lem:bhatt_hellinger} and setting $y=1-d^2$, we now bound the denominator from below:
\begin{equation}
    -\log\Big(\sqrt{1-d^2}\Big) = -\tfrac{1}{2}\log\big(1-d^2\big) \geq \tfrac{1}{2}d^2.
\label{eq:log_lower_bound}
\end{equation}
Consequently, any sample size satisfying
\begin{equation}
    N \geq \frac{2\log(1/\epsilon)}{d^2}
\label{eq:final_N_bound}
\end{equation}
also satisfies \eqref{eq:N_bound}, and therefore guarantees an error rate of at most $\epsilon$. This means that given a desired fixed error rate for the classifier $\mathcal{C}_{\text{multi}}$ that distinguishes between the distributions $P$ and $Q$ based on an i.i.d. sample, the sample size that is sufficient to meet the error rate is proportional to $\frac{1}{D_{JS}(P,Q)^2}$. Moreover, expanding the denominator of \eqref{eq:N_bound} around 0 we get that
\begin{equation}
    -\log\sqrt{1-d^2} = \tfrac{1}{2}d^2 + O(d^4),
\end{equation}
thus the bound \eqref{eq:final_N_bound} is tight to leading order as $d\to 0$.
\end{proof}

\section*{Discussion}

We present two operational readings of the Jensen-Shannon divergence in terms of sample complexity. The contrast between the two scalings, $1/d$ for the log-likelihood-ratio classifier and $1/d^{2}$ for the majority-vote classifier, implies that the same JSD value translates into very different data requirements depending on how the evidence in each sample is used. The optimal classifier accumulates the full magnitude of the log-likelihood ratio of every observation, whereas the majority-vote rule first hard-quantizes each sample to a single bit and only then aggregates. Discarding the strength of the per-sample evidence is precisely what costs the extra factor of $1/d$.

These two classifiers can be viewed as bracketing a broader spectrum. The LLR test is Bayes-optimal and therefore sets the best achievable scaling for any classification procedure, $1/d$; the majority vote represents the opposite extreme of maximally coarse local decisions combined by a simple fusion rule (which is optimal with no further assumptions on the individual classifiers). Intermediate strategies, such as soft-quantizing each sample, transmitting a few bits of confidence per observation, or weighting local votes by their confidence, would be expected to interpolate between these regimes. The relevant question in any applied setting is then how much per-sample information one can afford to retain before fusion. When samples must be compressed, communicated under a bit budget, or committed to independently, as in distributed sensing, federated estimation, or biological signaling, the $1/d^{2}$ penalty is the price of locality, and the gap to the $1/d$ optimum quantifies what is lost.

\begin{acknowledgements}
\noindent We thank members of Schneidman’s lab for critical suggestions and insights. This work was supported by Simons Collaboration on the Global Brain grant 542997, Israel Science Foundation grant 137628, Azrieli Institute for Brain and Neural Sciences and the Hedda, Alberto, and David Milman Baron Center for Research on the Development of Neural Networks of the Weizmann institute, as well as the Knell family Institute for Artificial Intelligence, Martin Kushner Schnur, and Mr. \& Mrs. Lawrence Feis. ES is the incumbent of the Joseph and Bessie Feinberg Chair.
\end{acknowledgements}

\printbibliography
\end{document}